\documentclass[letterpaper,10pt,conference]{ieeeconf}
\IEEEoverridecommandlockouts
\PassOptionsToPackage{draft}{hyperref}

\usepackage{packages}
\usepackage{theoremstyles}
\usepackage{mitmalstyle}
\usepackage{flushend}

\newcommand{\R}{\mathbb{R}}

\newcommand{\trans}{^{\top}}
\newcommand{\grad}{\nabla}
\newcommand{\half}{\tfrac{1}{2}}
\newcommand{\Msym}{\operatorname{sym}}

\newcommand{\eig}{\operatorname{eig}}
\newcommand{\Real}{\operatorname{Re}}
\newcommand{\Imag}{\operatorname{Im}}

\newcommand{\blockdiag}{\operatorname{blockdiag}}
\newcommand{\tr}{\operatorname{tr}}

\newcommand{\Metr}{M}
\newcommand{\Aanti}{A}
\newcommand{\Ssym}{S}
\newcommand{\ths}{\theta^\star_{s}}
\newcommand{\thr}{\theta^\star_{r}}
\newcommand{\Ms}{\Metr_{s}}
\newcommand{\Mr}{\Metr_{r}}

\newcommand{\Jhat}{\widehat J}

\newcommand{\Jflip}{\mathcal{J}_{\mathrm{flip}}}

\renewcommand{\baselinestretch}{0.985
}
\begin{document}

\title{Metric-Based Equilibrium Selection in Noncooperative Differentiable Games}

\author{Karan~Mahesh$^{1,2}$, Runyu~Zhang$^{3}$, Michael~R.~Benjamin$^{1}$, and Gioele~Zardini$^{3}$%
    \thanks{This material is based upon work supported by the National Science Foundation Graduate Research Fellowship Program under Grant No.~2141064. Any opinions, findings, and conclusions or recommendations expressed in this material are those of the author(s) and do not necessarily reflect the views of the National Science Foundation.}
    \thanks{$^{1}$K.~Mahesh and M.~R.~Benjamin are with the Department of   Mechanical Engineering, Massachusetts Institute of Technology, Cambridge, MA, USA. {\tt\small \{karanmah, mikerb\}@mit.edu}}%
    \thanks{$^{2}$K.~Mahesh is with the Woods Hole Oceanographic Institution (WHOI), Woods Hole, MA, USA.}%
    \thanks{$^{3}$R.~Zhang and G.~Zardini are with the Laboratory for Information and Decision Systems, Massachusetts Institute of Technology Cambridge, MA USA.{\tt\small \{runyuzha, gzardini\}@mit.edu}}%
}

\maketitle

\begin{abstract}
Metric conditioning can change which equilibrium attracts learning dynamics
without moving the equilibria of a differentiable game. We study how to use
state-dependent symmetric positive-definite (SPD) metrics to select among known
equilibria by keeping a designated equilibrium attracting while making a rival
unstable. Building on classical results on matrix stability under
positive-definite multiplication, we characterize when such destabilization is
possible and show how restricting the metric to act independently on each
player limits the stability changes that can be achieved at differential Nash
equilibria. We then combine a stabilizing metric at the selected
equilibrium with a destabilizing metric at the rival through smooth
interpolation, yielding a single metric field that preserves every equilibrium of the game while assigning the desired local stability types. We also derive a step-size condition for implementing the method in discrete
time. The approach is validated on a
continuous-commitment stag hunt and an entropy-regularized iterated prisoner's
dilemma, where it selects the desired equilibrium in both settings.
\end{abstract}

\section{Introduction}
\label{sec:intro}

Interaction-aware game solvers, including iterative linear-quadratic games and
their constrained variants~\cite{fridovichkeil2020ilqgames}, converge to a
local equilibrium determined by the basin in which they start. When several
collision-free equilibria exist, as at unsignalized intersections or in
head-on maritime encounters, the computed outcome can therefore depend on
initialization. Existing systems may resolve this ambiguity after solving the
game by applying right-of-way rules~\cite{avselection2024}. Our objective is to
select a specific equilibrium without modifying the game's equilibria themselves. We accomplish this with a change in the underlying geometry of the players' learning updates. Rescaling and reorienting the update directions leaves every
equilibrium in place, because at an equilibrium every player's own gradient is
already zero, but it can change which equilibria the learning process is drawn
to.
This raises a structural question: at which equilibria can such a change of
geometry make the equilibrium either attracting or unstable? We show that this
depends on the local game structure, not merely on whether the players
cooperate, and on whether the geometry may couple the players' updates or must
act on each player separately.

In this paper, we use a state-dependent symmetric positive-definite (SPD)
metric to precondition the game dynamics without moving its equilibria. Given two attracting equilibria, we designate one as the selected
equilibrium and the other as the rival. We choose a metric $M_s$
that keeps the selected equilibrium attracting and a metric $M_r$
that makes the rival unstable, and smoothly interpolate between them to obtain
a single metric field. We characterize when such destabilization is possible,
how block-diagonal metric restrictions limit it, and how the local stability
assignment carries over to discrete-time updates.

\paragraph{Related work}
Equilibrium selection for differentiable games has significant prior work. Young~\cite{young1993conventions}
characterizes the conventions selected by rare mistakes through long-run
stochastic stability. Here, by contrast, the desired convention is specified
in advance and encoded in deterministic learning dynamics. Zhang et
al.~\cite{zhang2025steer} study how to steer an unknown no-regret learner
toward a Stackelberg outcome; their information requirements and unilateral
influence differ from the design of a joint metric at known zeros of the game
field.

Metric-based game dynamics have precedent. Mertikopoulos and
Sandholm~\cite{mertikopoulos2018riemannian} derive Riemannian dynamics for
population games that encompass the replicator and projection dynamics. We employ similar constructions of game dynamics to determine which local Jacobians admit a change of
stability over the cone of symmetric positive-definite (SPD) matrices. Modifying these gradient dynamics to influence their limit is also established.
Shamma and Arslan~\cite{shamma2005dynamic} introduce dynamic fictitious and
gradient play, and Chasnov et al.~\cite{chasnov2020convergence} show that
unequal learning rates can change stability and basins of attraction while
preserving stationary points. Nash optimality and dynamical stability are
distinct properties~\cite{mazumdar2020gradient}. Local symplectic
surgery~\cite{mazumdar2025surgery} makes differential Nash equilibria the
attracting equilibria of an adjusted zero-sum flow under its assumptions. These methods, however, do not explicitly consider the existence of Nash equilibria that would converge under typical gradient methods but are not desirable. In our method, the aim is
to remove attraction to rival equilibria while keeping its location.

The decomposition of the game Jacobian into symmetric and antisymmetric parts
\cite{balduzzi2018mechanics,letcher2019differentiable} exposes the local
structure used here. SGA~\cite{balduzzi2018mechanics} and consensus
optimization~\cite{mescheder2017consensus} modify the field through its
Jacobian, but with fixed gains they treat symmetry-related equilibria
identically. Natural-gradient and K-FAC
methods~\cite{amari1998natural,martens2015kfac} also precondition by a
metric, but their symmetric statistics need not identify the full game
Jacobian.

Results on $H$-stability and inertia characterize when positive-definite
multipliers preserve stability properties
\cite{ostrowski1962inertia,carlson1963inertia,carlson1968hstability,kushel2019stability}, while
~\cite{johnson1974field} relates the field of values of a matrix to
eigenvalues attainable under positive-definite left multiplication. These
results already settle the single-equilibrium question: the necessary
conditions of \Cref{thm:necessary} specialize the inertia and $H$-stability
theory~\cite{ostrowski1962inertia,carlson1963inertia,carlson1968hstability},
and the existence of a destabilizing metric in \Cref{thm:switch} follows
directly from~\cite[Thm.~1$'$]{johnson1974field}. The contribution here is to carry these results over to
differentiable games: how to use such multipliers at different equilibria of the same game, subject to
player-block structure, and combine them into a single state-dependent metric
that preserves the original equilibrium set. Interpolating between the two metrics is
also related to gain-scheduled control~\cite{stilwell2000interpolation},
although here the objective is to assign different local stability types to
specified zeros rather than stabilize an operating family.

The main contributions of this paper are as follows:
\begin{enumerate}
    \item Building on classical results on matrix stability under positive-definite
          multiplication~\cite{ostrowski1962inertia,carlson1963inertia,johnson1974field},
          we characterize when an attracting rival equilibrium can be made unstable by
          a metric $M_r$, and show how restricting the metric to act
          independently on each player limits the stability changes that can be achieved.

    \item We combine the metrics $M_s$ and $M_r$ into a smooth
          state-dependent metric that preserves all equilibria while keeping
          the selected equilibrium attracting and making the rival unstable.
          We also establish when this local stability assignment carries over
          to discrete-time updates.

    \item We demonstrate the approach on a stag hunt and an iterated prisoner's
          dilemma, where it selects the payoff-dominant equilibrium and a
          prescribed player role, respectively, while preserving the original
          game equilibria.
\end{enumerate}
\section{Problem formulation}
\label{sec:prelim}

Each of $n$ players $i\in\{1,\dots,n\}$ holds a policy $\theta_i\in\R^{d_i}$
and minimizes a loss $L_i(\theta)$, where
$\theta=(\theta_1,\dots,\theta_n)\in\R^m$ is the joint policy and
$m=\sum_i d_i$.
Stacking each player's gradient with respect to its own policy gives the
\emph{simultaneous gradient}~\cite{balduzzi2018mechanics} and its Jacobian,
\begin{equation}
    \begin{aligned}
        \xi(\theta) & =\operatorname{col}_i(\grad_{\theta_i}L_i) \\
        J(\theta)   & =D\xi(\theta)=[J_{ij}(\theta)]_{i,j},\quad
        J_{ij}      =\nabla_{\theta_j}\nabla_{\theta_i}L_i.
    \end{aligned}
    \label{eq:field}
\end{equation}
Row block $i$ collects the derivatives of player $i$'s own gradient. The
diagonal blocks are the players' own Hessians and are symmetric; an
off-diagonal block $J_{ij}$ measures how player $j$'s decision alters player
$i$'s marginal incentive. Because $J_{ij}$ and $J_{ji}$ derive from different
losses, in general $J_{ij}\ne J_{ji}^\top$. Following Balduzzi et
al.~\cite{balduzzi2018mechanics}, write
\begin{equation}
    J=\Ssym+\Aanti,\qquad
    \Ssym=\half(J+J\trans),\quad \Aanti=\half(J-J\trans).
    \label{eq:decomposition}
\end{equation}
Equilibria are the zeros $\xi(\theta^\star)=0$. In simultaneous gradient play,
$\theta_{k+1}=\theta_k-\eta\xi(\theta_k)$, every player descends its own loss
while treating the others' decisions as fixed.

A \emph{metric field} is a
$C^1$ map $\theta\mapsto\Metr(\theta)\in\R^{m\times m}$ whose values are
symmetric positive definite (SPD), written $\Metr(\theta)\succ0$. At each
joint policy $\theta$ it defines the inner product
$\langle u,v\rangle_{\Metr(\theta)}=u^\top\Metr(\theta)v$ between tangent
vectors $u,v\in\R^m$. Preconditioning the simultaneous gradient by a metric field gives the
preconditioned simultaneous-gradient flow
\begin{equation}
    \dot\theta=-\Metr(\theta)^{-1}\xi(\theta).
    \label{eq:flow}
\end{equation}
Since $\xi(\theta^\star)=0$, every $\Metr\succ0$ leaves every equilibrium fixed.
Unequal positive learning rates are a special case: the update
$\theta_{k+1}=\theta_k-\Gamma\xi(\theta_k)$ with
$\Gamma=\blockdiag(\gamma_i I_{d_i})\succ0$ is the explicit Euler step of
\eqref{eq:flow} with step size $\eta$ and constant metric
$\Metr=\eta\Gamma^{-1}$~\cite{chasnov2020convergence}.
Linearizing with $\delta=\theta-\theta^\star$ gives
$\dot\delta=-\Metr(\theta^\star)^{-1}J(\theta^\star)\,\delta$. Below, $\Metr$ and $J$
denote these values at the equilibrium. The selection criterion uses the
full-spectrum \emph{stability margin}
\begin{equation}
    \mu(\Metr,J):=\min_i\Real\lambda_i(\Metr^{-1}J).
    \label{eq:margin}
\end{equation}
A positive margin implies local exponential stability and a negative margin
implies linear instability; at zero margin the linearization is
inconclusive. An equilibrium is \emph{nondegenerate} if its Jacobian is
invertible, $\det J(\theta^\star)\ne0$; it is then isolated, and because
$\det(\Metr^{-1}J)=\det J/\det\Metr$, nondegeneracy does not depend on the
metric. A linearization is \emph{hyperbolic} if no eigenvalue of
$\Metr^{-1}J$ has zero real part, and its \emph{unstable dimension} is the
number of eigenvalues with negative real part. Under $\Metr$, the equilibrium
is a \emph{strict sink} if $\mu(\Metr,J)>0$, a
\emph{source} if every eigenvalue has negative real part (unstable dimension
$m$), and a \emph{saddle} if the linearization is hyperbolic with unstable
dimension strictly between $0$ and $m$.

For the linearized gradient flow $\dot\delta=-J\delta$,
\begin{equation}
    \frac{d}{dt}\half\|\delta\|^2=-\delta^\top J\delta
    =-\delta^\top\Ssym\delta,
    \label{eq:energy}
\end{equation}
since $\delta^\top\Aanti\delta=0$. Thus $\Ssym$ determines the instantaneous
rate of change of the Euclidean distance to the equilibrium, whereas $\Aanti$
only rotates the state without directly changing that distance.

If $\Aanti(\theta)\equiv0$ on a simply connected open domain, the field is a
gradient, $\xi=\grad\Phi$, the game is a potential game, and \eqref{eq:flow}
is preconditioned gradient descent on $\Phi$. In potential games, metrics may
change basins without changing the linear stability type of nondegenerate
equilibria. Following Ratliff et al.~\cite{ratliff2016characterization}, a zero
$\theta^\star$ is a \emph{differential Nash equilibrium} if each player's own
Hessian is positive definite there:
\begin{equation}
    \xi(\theta^\star)=0,\qquad
    J_{ii}(\theta^\star)=\grad^2_{\theta_i}L_i(\theta^\star)\succ0
    \quad\text{for every }i.
    \label{eq:differentialnash}
\end{equation}
These own Hessians are the diagonal blocks $J_{ii}=\Ssym_{ii}$. Local Nash
optimality thus concerns only the own blocks, whereas stability of gradient
play depends on the full $J$. As Mazumdar et al.~\cite{mazumdar2020gradient}
show, attracting zeros can fail to be Nash, and differential Nash equilibria
can be saddles of gradient play. Below, \emph{equilibrium} means a zero of
the flow unless Nash optimality is stated explicitly.

\paragraph{Selection problem}
Given two equilibria, a \emph{selected} $\ths$ and a \emph{rival} $\thr$, find
$\Metr(\theta)\succ0$ such that
\begin{equation}
    \mu\big(\Metr(\ths),J(\ths)\big)>0
    \quad\text{and}\quad
    \mu\big(\Metr(\thr),J(\thr)\big)<0,
    \label{eq:selection}
\end{equation}
so that $\ths$ is locally attracting and $\thr$ is not, while neither
equilibrium moves. Typically the rival is the image of $\ths$ under a
symmetry of the game, and both are sinks under $\Metr=I$.

\section{When can a metric change equilibrium stability?}
\label{sec:structure}

When the rival is initially a strict sink, selection in the sense of
\eqref{eq:selection} requires a positive-definite metric that makes it
unstable. Let
\begin{equation}
    \Jflip=\big\{J:\ \exists\,\Metr_{+},\Metr_{-}\succ0,\ \mu(\Metr_{+},J)>0>\mu(\Metr_{-},J)\big\}
    \label{eq:jflip}
\end{equation}
denote the class of Jacobians that some SPD metric stabilizes and another destabilizes:
one metric, $\Metr_{+}$, makes the equilibrium a sink and another, $\Metr_{-}$,
makes it non-attracting.
The single-equilibrium facts below are specializations of classical results
on inertia and $H$-stability
\cite{ostrowski1962inertia,carlson1963inertia,carlson1968hstability,johnson1974field};
they are restated in the present notation, with short proofs for
completeness, because the selection construction of \Cref{sec:statedep}
uses them at each equilibrium. The player-block restrictions of
\Cref{thm:nash} then identify the role of the local game structure. The characterization holds equilibrium by equilibrium,
so different zeros of the same game can differ in whether they can be
destabilized.

The \emph{inertia} of a symmetric matrix $X$ is the triple
$\operatorname{In}(X)=(n_+,n_-,n_0)$ of its numbers of positive, negative, and
zero eigenvalues.

\begin{lemma}[Necessary conditions for SPD destabilization~\cite{ostrowski1962inertia,carlson1963inertia,carlson1968hstability}]
    \label{thm:necessary}
    Let $J=\Ssym+\Aanti$ with $\Ssym=\Ssym\trans$ and $\Aanti=-\Aanti\trans$.
    \begin{enumerate}
        \item If $\Aanti=0$, then for every $\Metr\succ0$ the spectrum of
              $\Metr^{-1}J$ is real, and its numbers of positive, negative, and
              zero eigenvalues equal those of $\Ssym$.
        \item If $\Ssym\succeq0$ then $\Real\eig(\Metr^{-1}J)\ge0$ for every
              $\Metr\succ0$; if $\Ssym\preceq0$ then $\Real\eig(\Metr^{-1}J)\le0$.
    \end{enumerate}
    Consequently, $J\in\Jflip$ implies $\Aanti\neq0$ \emph{and} $\Ssym$
    indefinite. These conditions are not sufficient.
\end{lemma}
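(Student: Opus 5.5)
For part 1 I will use a similarity transformation. When $\Aanti=0$ we have $J=\Ssym$, and with $\Metr\succ0$ having SPD square root $\Metr^{1/2}$,
\[
\Metr^{1/2}(\Metr^{-1}\Ssym)\Metr^{-1/2}=\Metr^{-1/2}\Ssym\Metr^{-1/2}.
\]
So $\Metr^{-1}J$ is similar to a real symmetric matrix. Its spectrum is therefore real, and it is diagonalizable. The matrix $\Metr^{-1/2}\Ssym\Metr^{-1/2}$ is congruent to $\Ssym$, so by Sylvester's law of inertia it has the same numbers of positive, negative and zero eigenvalues as $\Ssym$. Similar matrices have the same eigenvalues with multiplicity, which gives the claim.

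For part 2 I will work directly with an eigenpair. Let $\Metr^{-1}Jv=\lambda v$ with $v\in\mathbb{C}^m$, $v\neq0$. Then $Jv=\lambda\Metr v$. Multiplying on the left by $v^*$ gives
\[
v^*Jv=\lambda\,v^*\Metr v.
\]
Here $v^*\Metr v>0$ is real because $\Metr$ is SPD. Since $\Ssym$ is real symmetric and $\Aanti$ is real skew-symmetric, $v^*\Ssym v$ is real and $v^*\Aanti v$ is purely imaginary. Taking real parts yields
\[
\Real\lambda=\frac{v^*\Ssym v}{v^*\Metr v}.
\]
This is $\ge0$ when $\Ssym\succeq0$ and $\le0$ when $\Ssym\preceq0$. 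This extends the energy identity \eqref{eq:energy} to complex vectors and a general metric.

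The consequence follows by contraposition. Suppose $J\in\Jflip$, so there are metrics with $\mu(\Metr_+,J)>0$ and $\mu(\Metr_-,J)<0$.
\begin{itemize}
\item If $\Aanti=0$, part 1 says the eigenvalue signs of $\Metr^{-1}J$ are the same for every $\Metr$. If $\mu(\Metr_+,J)>0$, all eigenvalues are positive for $\Metr_+$, so $\Ssym\succ0$, and then $\mu(\Metr_-,J)>0$ too, a contradiction.
\item If $\Ssym\succeq0$, part 2 forces $\mu(\Metr_-,J)\ge0$, a contradiction.
\item If $\Ssym\preceq0$, part 2 forces $\mu(\Metr_+,J)\le0$, a contradiction.
\end{itemize}
Hence $\Aanti\ne0$ and $\Ssym$ is indefinite.

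For non-sufficiency I will give a $2\times2$ counterexample. Take $\Ssym=\operatorname{diag}(1,-1)$ and $\Aanti=\begin{psmallmatrix}0&a\\-a&0\end{psmallmatrix}$ with $a\neq0$. Then
\[
\det J=-1+a^2.
\]
For $|a|<1$ we get $\det J<0$. Since $\det(\Metr^{-1}J)=\det J/\det\Metr<0$, the matrix $\Metr^{-1}J$ has two real eigenvalues of opposite sign for every $\Metr\succ0$. So $\mu(\Metr,J)<0$ for all $\Metr$, no metric stabilizes $J$, and $J\notin\Jflip$, even though $\Aanti\ne0$ and $\Ssym$ is indefinite.

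The only delicate point is keeping track of multiplicities in part 1, and the similarity argument handles that.
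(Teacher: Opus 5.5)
Your proof is correct and follows the paper's argument: the same similarity-plus-Sylvester step for part~1, the same case-by-case contradiction for the consequence, and the same $\det J<0$ counterexample (the paper's matrix is your family with $a=1/2$). In part~2 you compute $\Real\lambda=v^*\Ssym v/(v^*\Metr v)$ directly from an eigenvector rather than invoking the field-of-values bound for $\Metr^{-1/2}J\Metr^{-1/2}$, but with $u=\Metr^{1/2}v$ this is the same computation written out explicitly.
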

\begin{proof}
    With $\Metr^{1/2}$ the symmetric square root, $\Metr^{-1}J$ is similar to
    $\tilde J:=\Metr^{-1/2}J\Metr^{-1/2}$, whose symmetric and skew parts are the
    congruences $\Metr^{-1/2}\Ssym\Metr^{-1/2}$ and
    $\Metr^{-1/2}\Aanti\Metr^{-1/2}$. By Sylvester's law of inertia,
    $\Msym(\tilde J)$ has the inertia of $\Ssym$ for every $\Metr\succ0$.
    (1) If $\Aanti=0$, $\tilde J$ is symmetric, so its spectrum is real with
    that inertia. (2) Every eigenvalue lies in the field of values, so
    \begin{equation}
        \lambda_{\min}\big(\Msym(\tilde J)\big)\le\Real\lambda\le
        \lambda_{\max}\big(\Msym(\tilde J)\big)
        \label{eq:fov}
    \end{equation}
    for all $\lambda\in\eig(\Metr^{-1}J)$; if $\Ssym\succeq0$ the left bound is
    nonnegative, and symmetrically for $\Ssym\preceq0$. For the consequence:
    $\mu(\Metr_{-},J)<0$ excludes $\Ssym\succeq0$ by (2), $\mu(\Metr_{+},J)>0$ excludes
    $\Ssym\preceq0$, so $\Ssym$ is indefinite; and by (1) $\Aanti=0$ would fix
    the number of positive eigenvalues, contradicting $\mu(\Metr_{+},J)>0>\mu(\Metr_{-},J)$.
    Non-sufficiency: take
    $J=\begin{psmallmatrix}1&1/2\\-1/2&-1\end{psmallmatrix}$.
    Its symmetric part is indefinite and its skew part is nonzero, but
    $\det(M^{-1}J)=\det J/\det M<0$, precluding positive stability.
\end{proof}

The necessary conditions become sufficient once a stabilizing metric is
known.
\begin{proposition}[Existence of a destabilizing SPD metric~{\cite[Thm.~1$'$]{johnson1974field}}]
    \label{thm:switch}
    Suppose there exists $\Metr_{+}\succ0$ with $\mu(\Metr_{+},J)>0$. Then a metric
    $\Metr_{-}\succ0$ with $\mu(\Metr_{-},J)<0$ exists if and only if $\Ssym$ is indefinite.
    For a unit vector $v$ with $v^\top\Ssym v<0$, one construction is
    \[
        Q_t=I+t vv^\top,\qquad \Metr_{-}=Q_t^{-1},
    \]
    where $t\ge0$ is large enough that $\tr J+t v^\top\Ssym v<0$.
\end{proposition}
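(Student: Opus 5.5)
Necessity is immediate from \Cref{thm:necessary}: if $\Ssym\succeq0$, part (2) gives $\Real\eig(\Metr^{-1}J)\ge0$ for every $\Metr\succ0$, so no $\Metr_{-}$ with $\mu(\Metr_{-},J)<0$ exists. If $\Ssym\preceq0$, part (2) forbids $\mu(\Metr_{+},J)>0$, contradicting the hypothesis. Hence a destabilizing metric forces $\Ssym$ indefinite. The content of the proposition is therefore sufficiency, which I will prove by verifying the explicit construction rather than invoking~\cite{johnson1974field} abstractly.

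For sufficiency, I assume $\Ssym$ is indefinite and pick a unit $v$ with $v^\top\Ssym v<0$. With $\Metr_{-}=Q_t^{-1}$, which is SPD since $Q_t=I+tvv^\top\succ0$ for $t\ge0$, we have $\Metr_{-}^{-1}J=Q_tJ=J+t\,vv^\top J$. The key step is the trace computation
\[
\tr(Q_tJ)=\tr J+t\,v^\top Jv=\tr J+t\,v^\top\Ssym v,
\]
using $v^\top\Aanti v=0$. Since $v^\top\Ssym v<0$, this is negative for all $t>\tr J/|v^\top\Ssym v|$ (and for all $t\ge0$ if $\tr J<0$). The sum of the real parts of the eigenvalues of $Q_tJ$ equals its trace, so at least one eigenvalue has negative real part, i.e.\ $\mu(\Metr_{-},J)<0$.

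I do not expect a real obstacle, though two points need care. First, the hypothesis on $\Metr_{+}$ is used only in the necessity direction, to exclude $\Ssym\preceq0$. For sufficiency, indefiniteness already supplies $v$, and the trace argument needs nothing else. Second, the construction yields strict negativity of $\mu$, which is exactly what the statement requires; hyperbolicity is not claimed. So the only subtlety is making sure the trace-to-real-part inequality is stated correctly: if all eigenvalues had nonnegative real part, the trace would be nonnegative.
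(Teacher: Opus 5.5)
Your proof is correct and follows the paper's argument: necessity comes from \Cref{thm:necessary}(2), and sufficiency comes from the trace identity $\tr(Q_tJ)=\tr J+t\,v^\top\Ssym v$ (using $v^\top\Aanti v=0$), which forces an eigenvalue of $Q_tJ$ into the open left half-plane. Your explicit threshold $t>\tr J/|v^\top\Ssym v|$ and your remark that $\Metr_{+}$ is needed only to rule out $\Ssym\preceq0$ are accurate details the paper leaves implicit.
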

\begin{proof}
    Necessity follows from \Cref{thm:necessary}. For sufficiency,
    $Q_t\succ0$ and
    $\tr(Q_tJ)=\tr J+t v^\top\Ssym v<0$, since
    $v^\top\Aanti v=0$. The real parts of the eigenvalues sum to this trace,
    so at least one is negative.
\end{proof}
In particular, an initially strict sink can be strictly destabilized if and
only if $\Ssym$ is indefinite.

Stability under every SPD, diagonal, or block-SPD multiplier corresponds to
the classical notions of $H$-, $D$-, and block $H$-stability
\cite{kushel2019stability}. The result above characterizes destabilization given a
stabilizing metric; it does not characterize SPD stabilizability in every
dimension.

\begin{remark}[Unstable modes under SPD metrics]
    \label{thm:index}
    At a nondegenerate equilibrium,
    $
        \det(M^{-1}J)=\frac{\det J}{\det M}\neq 0
    $
    for every $M\succ 0$. Consider any two SPD metrics and continuously
    interpolate between them. A real eigenvalue of $M^{-1}J$ cannot cross
    the imaginary axis through zero, while any crossing away from zero
    occurs as a complex-conjugate pair. Hence the number of unstable modes
    can only change by an even number. In particular, starting from a strict
    sink, an SPD metric can only produce an equilibrium with an even number
    of unstable modes, provided no eigenvalue lies on the imaginary axis at
    the endpoint.
\end{remark}
\Cref{thm:nash} characterizes destabilizability through indefiniteness of the symmetric part $S$. The results in \Cref{sec:statedep} characterizes structural conditions that enable or preclude such destabilization.
\begin{theorem}[Indefiniteness at differential Nash equilibria]
    \label{thm:nash}
    Partition $J$ into the player blocks $J_{ij}\in\R^{d_i\times d_j}$ of
    \eqref{eq:field}, and $\Ssym$ and $\Aanti$ accordingly, so that
    \[
        \Ssym_{ij}=\half\bigl(J_{ij}+J_{ji}\trans\bigr),\qquad
        \Aanti_{ij}=\half\bigl(J_{ij}-J_{ji}\trans\bigr).
    \]
    The off-diagonal blocks $\Ssym_{ij}$ and $\Aanti_{ij}$, $i\ne j$, are the
    cross-player parts of the potential and Hamiltonian components $\Ssym$
    and $\Aanti$ of Balduzzi et al.~\cite{balduzzi2018mechanics}.
    \begin{enumerate}
        \item Suppose $\Ssym_{ij}=0$ for all $i\ne j$, that is,
              $J_{ji}=-J_{ij}\trans$. At a differential Nash equilibrium,
              $\Ssym=\blockdiag(J_{11},\dots,J_{nn})\succ0$, so
              $\mu(\Metr,J)>0$ for every $\Metr\succ0$, and no SPD metric makes
              the equilibrium unstable.
              For zero-sum games and learning-rate ratios this is
              \cite[Thm.~3]{chasnov2021vector}.
        \item Let $n=2$ and $\Ssym_{12}\ne0$. At a differential Nash
              equilibrium, $\Ssym_{11}=J_{11}\succ0$ and $\Ssym_{22}=J_{22}\succ0$,
              and with the Schur complement
              $\Ssym/\Ssym_{11}:=\Ssym_{22}-\Ssym_{12}\trans\Ssym_{11}^{-1}\Ssym_{12}$,
              Haynsworth's formula gives
              $\operatorname{In}(\Ssym)=\operatorname{In}(\Ssym_{11})
                  +\operatorname{In}(\Ssym/\Ssym_{11})$, where
              $\operatorname{In}$ counts the positive, negative, and zero
              eigenvalues. Hence $\Ssym$ is indefinite if and only if
              $\lambda_{\min}(\Ssym/\Ssym_{11})<0$, equivalently
              \[
                  \bigl\|\Ssym_{11}^{-1/2}\Ssym_{12}\Ssym_{22}^{-1/2}\bigr\|_2>1:
              \]
              the off-diagonal block $\Ssym_{12}$ dominates the own curvature.
        \item Let $\Metr=\blockdiag(\Metr_1,\dots,\Metr_n)$ with
              $\Metr_i\in\R^{d_i\times d_i}$, $\Metr_i\succ0$, be a
              \emph{block-diagonal metric}: player $i$ preconditions only its own
              gradient, $\theta_i\mapsto\theta_i-\eta\Metr_i^{-1}\xi_i$. At a
              differential Nash equilibrium, every block-diagonal metric gives
              \[
                  \sum_{k=1}^m\Real\lambda_k(\Metr^{-1}J)
                  =\sum_{i=1}^n\tr\bigl(\Metr_i^{-1}J_{ii}\bigr)>0.
              \]
              Hence the unstable dimension is less than $m$, so a block-diagonal
              metric cannot make the equilibrium a source; if it makes the
              equilibrium hyperbolic and unstable, the result is a saddle. If
              moreover $m=2$ (two scalar players) and $\det J>0$, then
              $\mu(\Metr,J)>0$ for every block-diagonal metric: no block-diagonal
              metric destabilizes a strict sink, which extends
              \cite[Cor.~1(i)]{chasnov2020scalar} from learning-rate ratios to
              all block-diagonal metrics.
    \end{enumerate}
\end{theorem}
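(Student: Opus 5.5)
We prove the three parts separately. Each reduces to \Cref{thm:necessary}, a block-matrix inertia identity, or a trace computation.

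\emph{Part 1.} If $\Ssym_{ij}=0$ for all $i\ne j$, then $\Ssym$ is block diagonal with blocks $\Ssym_{ii}=J_{ii}$, since the diagonal blocks of $J$ are symmetric own Hessians. At a differential Nash equilibrium each $J_{ii}\succ0$, so $\Ssym\succ0$. Take the similarity $\tilde J=\Metr^{-1/2}J\Metr^{-1/2}$ from the proof of \Cref{thm:necessary}. Its symmetric part $\Metr^{-1/2}\Ssym\Metr^{-1/2}$ is positive definite, being a congruence of a positive definite matrix. The field-of-values bound \eqref{eq:fov} then gives
\[
\Real\lambda\ge\lambda_{\min}\bigl(\Metr^{-1/2}\Ssym\Metr^{-1/2}\bigr)>0
\]
for every eigenvalue. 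Hence $\mu(\Metr,J)>0$ for all $\Metr\succ0$. This strict version of item (2) of \Cref{thm:necessary} is the only new ingredient.

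\emph{Part 2.} Since $\Ssym_{11}=J_{11}\succ0$, Haynsworth's inertia additivity gives $\operatorname{In}(\Ssym)=\operatorname{In}(\Ssym_{11})+\operatorname{In}(\Ssym/\Ssym_{11})$, with $\operatorname{In}(\Ssym_{11})=(d_1,0,0)$. So $\Ssym$ has a positive eigenvalue in any case. It is indefinite exactly when the Schur complement has a negative eigenvalue, i.e.\ $\lambda_{\min}(\Ssym/\Ssym_{11})<0$.

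For the norm form, congruence by $\Ssym_{22}^{-1/2}$ preserves inertia and gives
\[
\Ssym_{22}^{-1/2}(\Ssym/\Ssym_{11})\Ssym_{22}^{-1/2}=I-K\trans K,\qquad K=\Ssym_{11}^{-1/2}\Ssym_{12}\Ssym_{22}^{-1/2}.
\]
This matrix has a negative eigenvalue if and only if $\sigma_{\max}(K)>1$, that is, $\|K\|_2>1$.

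\emph{Part 3.} The trace of $\Metr^{-1}J$ equals the sum of its eigenvalues, and the imaginary parts cancel in conjugate pairs, so the sum of the real parts equals $\tr(\Metr^{-1}J)$. With $\Metr^{-1}=\blockdiag(\Metr_i^{-1})$ block diagonal, the diagonal blocks of $\Metr^{-1}J$ are $\Metr_i^{-1}J_{ii}$. Hence $\tr(\Metr^{-1}J)=\sum_i\tr(\Metr_i^{-1}J_{ii})$. Each term equals $\tr(\Metr_i^{-1/2}J_{ii}\Metr_i^{-1/2})>0$ because $J_{ii}\succ0$. Since the real parts have positive sum, they cannot all be negative. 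So the unstable dimension is below $m$, and a hyperbolic unstable outcome must be a saddle.

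For $m=2$ with scalar players, $\Metr^{-1}J$ is $2\times2$. Its determinant is $\det J/(\Metr_1\Metr_2)>0$ and its trace is $J_{11}/\Metr_1+J_{22}/\Metr_2>0$. A real $2\times2$ matrix with positive trace and positive determinant has both eigenvalues with positive real part: if complex they share real part $\tr/2>0$, and if real their product and sum are positive. Hence $\mu>0$.

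The main point needing care is the strictness in Part 1. Item (2) of \Cref{thm:necessary} as stated gives only $\ge0$, so we invoke the lower bound \eqref{eq:fov} directly with a positive definite congruence. Everything else is routine.
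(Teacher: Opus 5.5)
Your proposal is correct and follows the paper's proof essentially step for step: the field-of-values lower bound under a positive definite congruence for item~1, Haynsworth additivity plus the congruence of the Schur complement to $I-K\trans K$ for item~2, and the trace/determinant argument for item~3. The only cosmetic difference is in item~3, where you read $\tr(\Metr^{-1}J)$ directly off the diagonal blocks $\Metr_i^{-1}J_{ii}$, whereas the paper splits $\tr(QJ)=\tr(Q\Ssym)+\tr(Q\Aanti)$ and uses $\tr(Q\Aanti)=0$; the two computations are equivalent.
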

\begin{proof}
    (1) A block-diagonal matrix
    with positive-definite blocks is positive definite, so the left side of
    \eqref{eq:fov} is positive and every eigenvalue has positive real part. (2) Haynsworth's inertia additivity gives the
    first equivalence. For the second, with
    $C=\Ssym_{11}^{-1/2}\Ssym_{12}\Ssym_{22}^{-1/2}$,
    $\Ssym/\Ssym_{11}=\Ssym_{22}^{1/2}(I-C\trans C)\Ssym_{22}^{1/2}$ is
    congruent to $I-C\trans C$, which has a negative eigenvalue if and only if
    $\|C\|_2>1$. (3) With $Q=\Metr^{-1}$, $\tr(QJ)=\tr(Q\Ssym)+\tr(Q\Aanti)$
    and $\tr(Q\Aanti)=0$ for symmetric $Q$ and skew $\Aanti$;
    block-diagonality leaves $\sum_i\tr(Q_i\Ssym_{ii})$, and each term is the
    trace of a product of positive-definite matrices, hence positive. For
    $m=2$, $\det(QJ)=\det Q\det J>0$ and $\tr(QJ)>0$ place both eigenvalues in
    the open right half-plane.
\end{proof}

Noncooperation alone does not make $\Ssym$ indefinite. In a zero-sum pair,
$L_2=-L_1$ gives $J_{21}=-J_{12}\trans$, so $\Ssym_{12}=0$ and the interaction
is carried entirely by $\Aanti$; with identical interests, $L_1=L_2$ gives
$J_{21}=J_{12}\trans$ and $\Aanti=0$. Item~2 of \Cref{thm:nash} requires an off-diagonal block $\Ssym_{12}$ that dominates the own
curvature, which, for example, a common
collision cost can produce. For zero-sum differential Nash
equilibria, setting $M=I$ recovers the gradient-flow stability result of
Mazumdar et al.~\cite{mazumdar2020gradient}.

\subsection{From local destabilization to equilibrium selection}
\label{sec:statedep}
The construction has two steps. First, a constant SPD matrix is chosen for
each of the two equilibria: $\Ms$ for the selected equilibrium, which must
keep it attracting, $\mu(\Ms,J(\ths))>0$, and $\Mr$ for the rival, which must
make it unstable, $\mu(\Mr,J(\thr))<0$. For a rival that is initially a sink, \Cref{thm:switch} gives one choice
of $\Mr$ whenever $\Ssym(\thr)$ is indefinite, and \Cref{sec:placement} describes how both
matrices are constructed in practice. Second, the two matrices are joined into
a single metric field that takes the value $\Ms$ at $\ths$ and $\Mr$ at
$\thr$, so $\Ms$ and $\Mr$ are called the \emph{equilibrium metrics}. The field
is
\begin{equation}
    M(\theta)=w(\theta)\Ms+(1-w(\theta))\Mr,\qquad 0\le w\le1.
    \label{eq:statedep}
\end{equation}
For distinct equilibria, one globally smooth weight that reproduces the two
equilibrium metrics exactly is
\begin{equation}
    w(\theta)=\frac{\|\theta-\thr\|^2}
    {\|\theta-\ths\|^2+\|\theta-\thr\|^2}.
    \label{eq:exactweight}
\end{equation}
It gives $M(\ths)=\Ms$ and $M(\thr)=\Mr$, and, like every SPD metric field,
it preserves every zero. All experiments use this interpolant.

\begin{theorem}[Local selection by a state-dependent metric]
    \label{cor:selection}
    If distinct $\ths,\thr$ are strict sinks under $M=I$ and $S(\thr)$ is
    indefinite, a smooth SPD metric field preserves all equilibria while leaving
    $\ths$ a strict sink and making $\thr$ linearly unstable.
\end{theorem}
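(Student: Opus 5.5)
We construct the field explicitly from \eqref{eq:statedep} and \eqref{eq:exactweight}, and then verify the three required properties in turn: the field is SPD and smooth, every zero of $\xi$ is preserved, and the stability types at $\ths$ and $\thr$ are as claimed.

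\emph{Equilibrium metrics.} For the selected equilibrium we take $\Ms=I$. By hypothesis $\mu(I,J(\ths))>0$, so $\ths$ stays a strict sink. For the rival, $\thr$ is a strict sink under $\Metr_{+}=I$ and $\Ssym(\thr)$ is indefinite. \Cref{thm:switch} therefore gives $\Mr=Q_t^{-1}$ with $Q_t=I+tvv\trans$, where $v$ is a unit vector with $v\trans\Ssym(\thr)v<0$ and $t$ is large enough that $\tr J(\thr)+t\,v\trans\Ssym(\thr)v<0$. This $\Mr$ is SPD and satisfies $\mu(\Mr,J(\thr))<0$.

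\emph{Interpolation.} The weight \eqref{eq:exactweight} has a denominator $\|\theta-\ths\|^2+\|\theta-\thr\|^2$ that is a polynomial and vanishes only if $\theta=\ths=\thr$, which cannot happen for distinct equilibria. Hence $w$ is $C^\infty$ on all of $\R^m$, with $0\le w\le1$, $w(\ths)=1$ and $w(\thr)=0$. The field $M(\theta)=w\Ms+(1-w)\Mr$ is a convex combination of two SPD matrices: $u\trans M u=w\,u\trans\Ms u+(1-w)\,u\trans\Mr u>0$ for $u\ne0$, since at least one weight is positive. So $M$ is a smooth SPD metric field. Preservation of all equilibria then follows immediately. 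The flow \eqref{eq:flow} vanishes exactly where $\xi$ does, because $M(\theta)^{-1}$ is invertible everywhere.

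\emph{Linearization at the equilibria.} This is the only step that needs some care, because $M$ is state dependent and the Jacobian of $-M(\theta)^{-1}\xi(\theta)$ picks up a term from $D(M^{-1})$. Differentiating at a zero $\theta^\star$ gives
\[
D\bigl(M^{-1}\xi\bigr)(\theta^\star)=D(M^{-1})(\theta^\star)[\,\cdot\,]\,\xi(\theta^\star)+M(\theta^\star)^{-1}J(\theta^\star)=M(\theta^\star)^{-1}J(\theta^\star),
\]
since $\xi(\theta^\star)=0$; here we use that $M$ is $C^1$. The linearization at $\ths$ is therefore $-\Ms^{-1}J(\ths)$, which has positive margin, so $\ths$ is locally exponentially stable. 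The linearization at $\thr$ is $-\Mr^{-1}J(\thr)$, which has an eigenvalue $\lambda$ of $\Mr^{-1}J(\thr)$ with $\Real\lambda<0$. That eigenvalue corresponds to an eigenvalue of the linearized flow with positive real part, so $\thr$ is linearly unstable, and by the standard linearization instability theorem it is not attracting.
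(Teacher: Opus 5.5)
Your proposal is correct and is the paper's proof: $\Ms=I$, $\Mr$ the rank-one metric $\Metr_{-}$ of \Cref{thm:switch} (with $\Metr_{+}=I$), joined by the exact interpolant \eqref{eq:exactweight}, so that the linearization at each zero is $-M(\theta^\star)^{-1}J(\theta^\star)$ because the $D(M^{-1})$ term multiplies $\xi(\theta^\star)=0$. You spell out what the paper's one-line proof leaves implicit (smoothness of $w$, positive definiteness of the convex combination, and the vanishing derivative term); only your closing claim that $\thr$ is not attracting goes beyond the statement, which asks just for linear instability.
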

\begin{proof}
    Set $\Ms=I$, take $\Mr$ to be the metric $\Metr_{-}$ of \Cref{thm:switch}, and use
    \eqref{eq:statedep}--\eqref{eq:exactweight}. The linearizations at the two
    equilibria have the required spectra.
\end{proof}

Positive definiteness of the interpolant guarantees invertibility, not
stability throughout the state space; in gain scheduling, stability-preserving
interpolation requires additional conditions~\cite{stilwell2000interpolation}.

\paragraph{Selecting between symmetry-related equilibria}

A game that treats its players identically can have equilibria that differ
only in which player takes which role. In the iterated prisoner's dilemma of
\Cref{sec:dilemmas}, neither attracting equilibrium is symmetric: in each, one
player cooperates more and the other earns more, and exchanging the players
maps one equilibrium onto the other. Such a pair is called a
\emph{symmetry-related} pair. Formally, a symmetry of the game is an orthogonal
matrix $R$ (for example, the player swap $\Pi$) such that
$\xi(R\theta)=R\,\xi(\theta)$.
Differentiating gives $J(R\theta)=RJ(\theta)R\trans$, so if $\thr=R\ths$, then
$J_r=RJ_sR\trans$. Because $R\trans=R^{-1}$, this is a similarity
transformation: $J_r$ and $J_s$ have the same eigenvalues, and gradient play
attracts both equilibria or neither.

The same conclusion holds for any update rule that does not distinguish the
two roles. Consider an update $\theta_{k+1}=\theta_k-\eta\,u(\theta_k)$ whose
direction $u$ vanishes at the equilibria. With $\delta_k=\theta_k-\theta^\star$,
its linearization is $\delta_{k+1}=(I-\eta K)\delta_k$, where
$K=Du(\theta^\star)$ is the Jacobian of the update direction. Gradient
play ($u=\xi$) gives $K=J$, a constant metric ($u=\Metr^{-1}\xi$) gives
$K=\Metr^{-1}J$, SGA ($u=\xi+\lambda\Aanti\trans\xi$) gives
$K=(I+\lambda\Aanti\trans)J$, and consensus optimization
($u=\xi+\gamma J\trans\xi$) gives $K=(I+\gamma J\trans)J$. Suppose $K=F(J)$ is
computed from $J$ alone. The rule is \emph{equivariant} if relabeling the
coordinates of its input relabels its output in the same way,
$F(RJR\trans)=RF(J)R\trans$; informally, the rule has no built-in preference
between the two roles.

\begin{theorem}[Equivariant updates]
    \label{thm:equivariance}
    If $J_r=RJ_sR\trans$ for an orthogonal $R$ and $F$ is equivariant, then
    $K_r=RK_sR\trans$. Hence $K_r$ and $K_s$ have identical spectra, and $\thr$
    is attracting under the update exactly when $\ths$ is.
\end{theorem}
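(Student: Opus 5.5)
The argument is a direct substitution followed by a similarity argument. I will write $K_s=F(J_s)$ and $K_r=F(J_r)$ for the update Jacobians at the two equilibria.

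\emph{Step 1: transfer the Jacobian relation.} Substituting $J_r=RJ_sR\trans$ and using equivariance of $F$ gives
\[
K_r=F(J_r)=F(RJ_sR\trans)=RF(J_s)R\trans=RK_sR\trans .
\]

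\emph{Step 2: identical spectra.} Since $R$ is orthogonal, $R\trans=R^{-1}$, so $K_r=RK_sR^{-1}$ is a similarity transform of $K_s$. Similar matrices share characteristic polynomials, so $K_r$ and $K_s$ have the same eigenvalues with the same algebraic multiplicities. Jordan structure is preserved as well, which matters only at the boundary of the stability region.

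\emph{Step 3: identical attraction.} The linearized update at either equilibrium is $\delta_{k+1}=(I-\eta K)\delta_k$. The iteration matrices satisfy
\[
I-\eta K_r=R(I-\eta K_s)R\trans ,
\]
so they are also similar and have the same spectral radius. Local attraction of the linearization is characterized by spectral radius strictly below one, and instability by spectral radius above one. These criteria therefore agree at $\thr$ and $\ths$ for every step size $\eta$. In continuous time the analogous statement holds with the margin $\min_i\Real\lambda_i(K)$.

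Beyond this, I can make the linearized case fully exact. Write $P=I-\eta K_s$. Then the linear iterates satisfy
\[
(I-\eta K_r)^k=RP^kR\trans ,
\]
so the rival's linear trajectories are the rotated images of the selected equilibrium's trajectories.

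\emph{Main obstacle: the boundary case.} When the spectral radius equals one, the linearization is inconclusive. I would state the conclusion for the hyperbolic case, as the margin conventions earlier in the paper do. Alternatively, I would note that both equilibria are inconclusive simultaneously, so "attracting exactly when" still holds at the level of the linearization.

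Everything else is a routine use of similarity invariance. The only real content is where the symmetry $R$ enters, namely through equivariance in Step 1.
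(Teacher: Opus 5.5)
Your proof is correct and follows the paper's argument. The paper's proof is exactly your Step 1, the substitution $K_r=F(RJ_sR\trans)=RF(J_s)R\trans=RK_sR\trans$, with identical spectra and identical attraction left implicit from similarity via $R\trans=R^{-1}$; your Steps 2--3 and your remark on the non-hyperbolic case spell out what the paper leaves unstated.
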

\begin{proof}
    $K_r=F(RJ_sR\trans)=RF(J_s)R\trans=RK_sR\trans$.
\end{proof}

Gradient play, and SGA and consensus optimization with fixed scalar gains, are
equivariant, as is any metric computed equivariantly from $J$; methods with
internal state, such as momentum, are equivariant when that state transforms
in the same way. Selecting one equilibrium of a symmetry-related pair therefore
requires a convention that breaks the symmetry, such as which player
concedes. The state-dependent field \eqref{eq:statedep} provides one: it
assigns a stabilizing metric to the selected equilibrium and a destabilizing
metric to its symmetric counterpart.

\section{Metric synthesis and execution}
\label{sec:estimation}

This section constructs the equilibrium metrics from the Jacobians at the equilibria,
joins them by the interpolant \eqref{eq:exactweight}, and executes the
resulting flow with explicit Euler steps. The equilibria $\ths,\thr$ are
assumed known, and $\Jhat$ denotes the Jacobian supplied to the design,
either estimated or computed exactly.

\subsection{Constructing candidate equilibrium metrics}
\label{sec:placement}

The rank-one construction of \Cref{thm:switch} gives a direct candidate for
the rival: choose an eigenvector of $\Msym(\widehat J_r)$ with a negative
eigenvalue and increase $t$ in $Q_t=I+t vv^\top$. If gradient play already
attracts the selected equilibrium, $M_s=I$ is a candidate stabilizing metric.
For either candidate, the spectra at the equilibria are then computed to confirm the
signs that selection requires: $\mu(\Ms,\widehat J_s)>0$ at the selected
equilibrium and $\mu(\Mr,\widehat J_r)<0$ at the rival. No margin is
prescribed for these candidates; margins enter only through the fitted
metrics below.

The rank-one candidate fixes only the sign of the rival's slowest mode. It
offers no control over how quickly the selected equilibrium attracts or how
quickly the rival repels, and the direction $v$ generally mixes the players'
coordinates, so it cannot respect a requirement that each player rescale only
its own update, that is, a block-diagonal metric. To meet prescribed margins
or such player-block constraints, the experiments instead fit
$\eig(\Metr^{-1}\widehat J)$ numerically, treating the eigenvalues at the equilibria as
design targets. A metric must remain symmetric positive definite, which an
unconstrained optimizer would not preserve. The metric is therefore
parameterized on the positive-definite cone by a symmetric log-parameter $W$,
optionally masked to block-diagonal form:
$\Metr(W)=\exp(\Msym(W)\odot\mathrm{mask})$, where $\exp$ is the matrix
exponential and the block mask is symmetric. Every $W$ then yields a valid
metric, and the mask enforces the player structure exactly rather than by
penalty. The two equilibria place opposite demands, since the selected
equilibrium must remain attracting while the rival must become repelling.
Because the interpolant \eqref{eq:exactweight} applies each equilibrium metric
only at its own equilibrium, these demands decouple, and the two equilibrium
metrics are fitted separately.
Write $K_j(W)=\Metr(W)^{-1}\widehat J_j$, $j\in\{s,r\}$, for the
Jacobian of the update direction $\Metr(W)^{-1}\xi$ at each equilibrium and $\lambda_k(K)$ for the $k$th
eigenvalue of a matrix $K$, as in \eqref{eq:margin}. The objectives are
\begin{equation}
    \begin{aligned}
        \ell_{s}(W) & =\operatorname{mean}_k\operatorname{softplus}\bigl(\alpha_s-\Real\lambda_k(K_s(W))\bigr) \\
                    & \quad+\rho_c\|\Msym(W)\|_F^2,                                                            \\
        \ell_{r}(W) & =\operatorname{softplus}\bigl(\alpha_r+\mu(\Metr(W),\widehat J_r)\bigr)                  \\
                    & \quad+\rho_c\|\Msym(W)\|_F^2,
    \end{aligned}
    \label{eq:surrogate}
\end{equation}
where $\operatorname{softplus}(x)=\log(1+e^x)$ and $\alpha_s,\alpha_r>0$
are target margins. Minimizing $\ell_s$ pushes the real parts of all
eigenvalues at the selected equilibrium above $\alpha_s$, and minimizing
$\ell_r$ pushes the stability margin \eqref{eq:margin} at the rival, its
smallest real part, below $-\alpha_r$.
The signs follow from the linearization $-M^{-1}J$: eigenvalues of $M^{-1}J$
with positive real part give decay, whereas one with negative real part gives
a growing mode. The two objectives differ in form because the two requirements
differ. Attraction requires every eigenvalue to have positive real part, so
$\ell_s$ penalizes each eigenvalue that falls below the margin; repulsion
requires only a single growing mode, so $\ell_r$ acts on the smallest real
part alone and leaves the remaining modes free. The softplus is a smooth
surrogate for $\max(0,x)$, which supplies gradients where the hinge would be
flat or kinked. The term weighted by $\rho_c$ penalizes large log-metrics and
thereby keeps the fitted metrics well conditioned; a poorly conditioned metric
can enlarge the eigenvalue magnitudes $|\lambda_k|$ and so shrink the
admissible step size \eqref{eq:stepsize}.
Because these penalties are soft, they do not guarantee the target margins,
and the fitted spectra at the equilibria must be checked.
Both objectives are continuous in $W$; they are differentiable except where
eigenvalues collide or, for $\ell_r$, where two eigenvalues tie for the
smallest real part. Normalizing the trace after fitting changes the
rates but not the signs. The rival fit must be initialized with care. When
the rival's eigenvalues under $\Metr=I$ are real, decreasing $\mu$ from
$W=0$ drives the smallest eigenvalue toward zero, which it cannot cross along
an SPD path (\Cref{thm:index}), so the fit can stall at a positive margin. The
rival fit is therefore started at the rank-one metric of \Cref{thm:switch},
whose log-parameter is $W_0=-\log(1+t)\,vv\trans$ and whose trace
$\tr(\Metr^{-1}\widehat J_r)$ is already negative; under a block mask, $W_0$
is masked, and the fitted margin is checked as before. The fitting objective uses the full $J$ but
provides no certificate. A certificate for a fixed $\Metr$ follows from
bisection on $\alpha$ over the standard Lyapunov LMI
\begin{equation}
    \text{find }P:\ I\preceq P\preceq\kappa I,\ \
    P(\Metr^{-1}J)+(\Metr^{-1}J)\trans P\succeq2\alpha P
    \label{eq:certify}
\end{equation}
which returns a lower bound on $\mu$; the conditioning cap $\kappa$ can
make this bound conservative.

\subsection{Discrete-time execution}

At an exact zero, the derivative of $\Metr^{-1}(\theta)$ multiplies
$\xi(\theta^\star)=0$, so the linearization of the Euler update is
$I-\eta\Metr^{-1}J$, and the standard local convergence criterion is
$\rho(I-\eta\Metr^{-1}J)<1$ (see Chasnov et
al.~\cite[Prop.~2 and Remarks~1--2]{chasnov2020convergence}). With
$\lambda_i\in\eig(\Metr^{-1}J)$, this holds exactly for step sizes
\begin{equation}
    0<\eta<\eta_{\max}:=\min_i
    \frac{2\Real\lambda_i}{|\lambda_i|^2},
    \label{eq:stepsize}
\end{equation}
provided every $\Real\lambda_i>0$; otherwise set $\eta_{\max}=0$.
The criterion does not require diagonalizability, and it gives local
exponential convergence in a suitable norm, not necessarily the Euclidean
one.

If the rival has an eigenvalue with $\Real\lambda<0$, then
$|1-\eta\lambda|>1$ for every $\eta>0$, so strict instability in continuous
time carries over to the Euler linearization.

The bound \eqref{eq:stepsize} can be restrictive: increasing $|\Imag\lambda|$
at fixed $\Real\lambda$ shrinks $\eta_{\max}$ quadratically.

\paragraph{Decentralized execution and centralized design}
A block-diagonal metric splits the update into per-player steps
$\theta_{i,k+1}=\theta_{i,k}-\eta\,\Metr_i(\theta_k)^{-1}\xi_i(\theta_k)$,
where $\Metr_i(\theta)=w(\theta)\Metr_{s,i}+(1-w(\theta))\Metr_{r,i}$ is player
$i$'s block of \eqref{eq:statedep}. Spectral synthesis from the
full Jacobian, however, requires centralized design, and the interpolation
weight may itself require shared state, so per-player linear solves alone do
not make sensing or computation fully decentralized. At differential Nash
equilibria, block-diagonal metrics cannot create hyperbolic sources, although
in higher dimensions they can create saddles (\Cref{thm:nash}(3)).

\section{Experiments}
\label{sec:experiments}

Two social games test the structural destabilization conditions and empirical selection under
a common protocol. Throughout, local stability is reported separately from
basin coverage, and execution uses exact, noise-free field evaluations.

\subsection{Protocol}
Equilibrium metrics are fitted to finite-difference estimates of the
Jacobians at the equilibria by minimizing \eqref{eq:surrogate} with
$(\alpha_s,\alpha_r)=(0.30,0.15)$, normalized to $\tr M=m$, and joined by the
exact interpolant \eqref{eq:exactweight}. \Cref{tab:outcomes} reports the
outcomes of the resulting metric and, for reference, of unmodified gradient
play from the same initializations and budgets; a run counts toward an
equilibrium when it ends within $10^{-3}$ of it, and \emph{other} denotes
escape or non-convergence. The basin outcomes are empirical.

\begin{table}[t]
  \centering
  \caption{Outcomes as selected/rival/other. Gradient play is the
    unmodified dynamics, shown for reference.}
  \label{tab:outcomes}
  \setlength{\tabcolsep}{3pt}
  \footnotesize
  \begin{tabular}{@{}lcc@{}}
    \toprule
    Update                   & Stag hunt ($81$)    & Iterated PD ($120$) \\
    \midrule
    Gradient play            & $36/45/0$           & $60/60/0$           \\
    Metric, full             & $\mathbf{81/0/0}$   & $\mathbf{120/0/0}$  \\
    Metric, block-diagonal   & cannot destabilize & $\mathbf{120/0/0}$  \\
    \bottomrule
  \end{tabular}
\end{table}


\subsection{Social dilemmas: stag hunt and iterated prisoner's dilemma}
\label{sec:dilemmas}

\begin{figure*}[t]
      \centering
      \includegraphics[width=\textwidth]{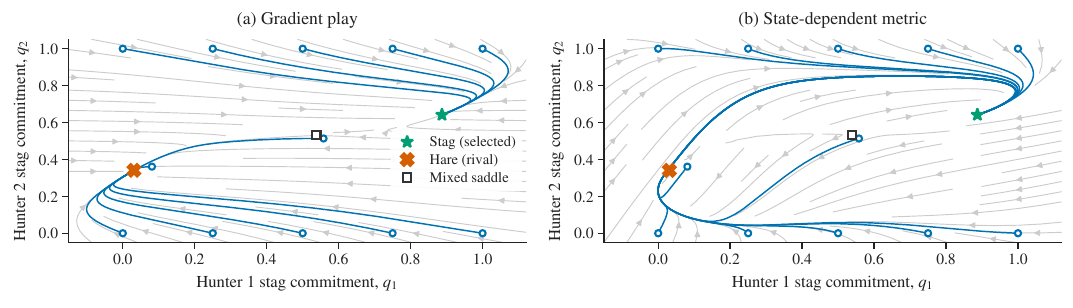}
      \caption{Stag hunt: trajectories from the same twelve initial commitments
            (open circles) under (a) gradient play and (b) the state-dependent
            metric, with $\eta=0.01$ and at most $20\,000$ updates. Gradient play
            reaches either equilibrium depending on the start; the metric keeps
            both equilibria in place, makes hare unstable, and every displayed
            trajectory reaches stag. Outcomes for all $81$ starts are reported in
            Table~\ref{tab:outcomes}.}
      \label{fig:staghunt}
\end{figure*}

\begin{figure*}[t]
      \centering
      \includegraphics[width=\textwidth]{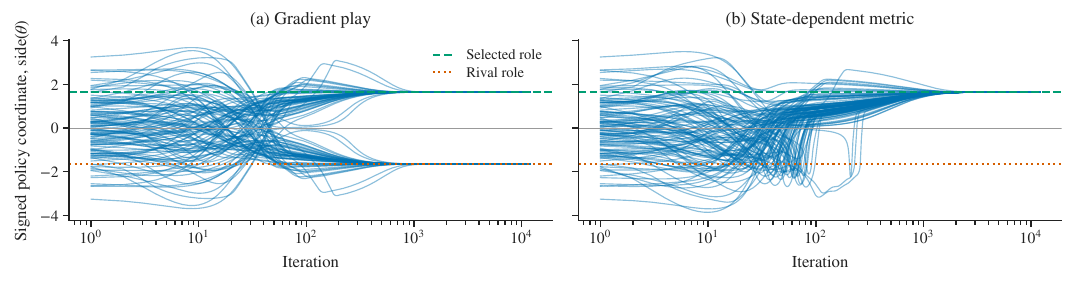}
      \caption{Iterated PD: signed policy coordinate along the axis from the
            rival to the selected equilibrium, for all $120$ starts, under
            (a) gradient play and (b) the state-dependent metric. Gradient play
            splits $60/60$ between the two roles; under the metric, runs that
            approach the rival are repelled, and all $120$ reach the selected
            role.}
      \label{fig:ipd}
\end{figure*}

Smooth extensions of two classical dilemmas test the structural results. The
continuous-commitment stag hunt has two equilibria that are \emph{not}
related by a symmetry: stag is payoff-dominant and hare is risk-dominant. The
entropy-regularized iterated prisoner's dilemma has a pair of
symmetry-related equilibria, exchanged by swapping the players.

\paragraph{Stag hunt}
Each hunter chooses a stag commitment $q_i\in\R$, where $q_i=1$ is pure stag
and $q_i=0$ is pure hare. The stag can be caught only jointly and pays
$R\,q_iq_j$; the hare can be caught alone and pays $P(1-q_i)$ regardless of
the partner. Hunter $i$ weights the catch by a stake $\kappa_i>0$, pays a
convex commitment cost, and minimizes
\begin{equation}
      L_i=\gamma\bigl(q_i-\tfrac12\bigr)^4-\kappa_i\bigl(Rq_iq_j+P(1-q_i)\bigr),
      \label{eq:staghunt}
\end{equation}
with $R=3$, $P=1.6$, $\gamma=1.4$, and $(\kappa_1,\kappa_2)=(1,0.015)$. At
pure commitments $q\in\{0,1\}^2$, the reward reduces to the classic stag-hunt
matrix~\cite{skyrms2004stag}: $R$ to each hunter if both pursue the stag, $P$
to a hare hunter regardless of the partner, and nothing to a lone stag
hunter. The quartic cost makes the best responses nonlinear; a quadratic cost
would give an affine field and hence a unique equilibrium. With these
parameters, stag at $q^\star=(0.89,0.64)$ and hare at $(0.03,0.34)$ are
attracting differential Nash equilibria under gradient play, separated by a
mixed saddle at $(0.54,0.53)$, and stag gives both hunters lower loss. Both
equilibria lie in $[0,1]^2$, so each coordinate reads as a level of
commitment: the high-stakes hunter commits almost fully to its prey, whereas
the low-stakes hunter only leans toward it. At both equilibria $\Ssym$ is
indefinite and $\Aanti\ne0$, as \Cref{thm:switch} requires.

The fitted metric keeps stag a sink, with eigenvalues $0.89\pm0.14i$. It
makes hare a source, with eigenvalues $-31.1$ and $-5.1$ and
$\operatorname{tr}(M(\thr)^{-1}\Ssym)=-36.3$. The mixed equilibrium remains a
saddle.

Execution uses $\eta=0.01$ and at most $20\,000$ updates from a $9\times9$
grid of initial commitments in $[0,1]^2$. The step size satisfies \eqref{eq:stepsize} at stag, where
$\eta_{\max}=2.20$. The metric reaches stag from all
$81$ starts, whereas unmodified gradient play reaches stag from $36$ and hare
from $45$. \Cref{tab:outcomes} lists the outcomes, and \Cref{fig:staghunt}
shows representative trajectories.

Because stag and hare are not related by a symmetry, \Cref{thm:equivariance}
does not rule out a fixed metric. In two dimensions, however, the argument of
\Cref{thm:nash}(3) shows that stability is decided by the sign of
$\operatorname{tr}(\Metr^{-1}\Ssym)$. Since
$\Ssym_{\mathrm{stag}}\preceq\Ssym_{\mathrm{hare}}$, any fixed $\Metr$ that
destabilizes hare also destabilizes stag, and by \Cref{thm:nash}(3) no
block-diagonal metric can destabilize hare.

\paragraph{Iterated prisoner's dilemma}
Memory-one strategies induce a four-state Markov chain in the repeated
prisoner's dilemma~\cite{press2012ipd}. Here player $i$ holds logits
$\theta_i\in\R^5$ for its cooperation probabilities $\sigma(\theta_{i,k})$, as in
differentiable opponent-learning models~\cite{foerster2018lola,letcher2019sos},
and $\theta=(\theta_1,\theta_2)\in\R^{10}$ is the joint policy.
The five entries cover the first move and the moves after CC, CD, DC, and
DD. A player earns the
reward $R$ for mutual cooperation, the sucker's payoff $S$ for cooperating
against a defector, the temptation $T$ for defecting against a cooperator, and
the punishment $P$ for mutual defection, so the payoff vector of player $i$
over the states CC, CD, DC, DD is $r_i=(R,S,T,P)$. With
$(R,S,T,P)=(3,0,5,1)$, which satisfies the dilemma ordering $T>R>P>S$,
discount $\gamma=0.95$, and entropy weight $\tau=0.2$, the loss of player $i$
as a function of the joint policy is
\begin{equation}
      L_i(\theta)=-(1-\gamma)p_0^\top(I-\gamma P_\theta)^{-1}r_i
      -\tau\sum_{k=1}^5H(\sigma(\theta_{i,k})),
      \label{eq:ipd}
\end{equation}
where $p_0$ is the joint initial distribution, $P_\theta$ the memory-one
transition matrix induced by $\theta$, $\sigma$ the logistic function,
and $H$ the binary entropy, summed over the policy entries without weighting
by state occupancy. The differentiable formulation
follows~\cite{foerster2018lola,letcher2019differentiable,letcher2019sos}, and
the entropy term defines the regularized game studied here. Gradient play has
two attracting equilibria in which one player cooperates more and the other
earns more, with values $(V_1,V_2)=(2.04,1.22)$ and, with the players exchanged,
$(1.22,2.04)$. They are related by the block swap $\Pi$ of the players' logits. A symmetric saddle with one unstable real mode separates
them. Since $\Pi$ is orthogonal, \Cref{thm:equivariance} implies that every
fixed equivariant rule assigns both equilibria the same spectrum. Selection is
thus a role convention that decides which player leads. Both equilibria are
differential Nash since the smallest own-Hessian eigenvalues are positive as required by \eqref{eq:differentialnash}.
$\Ssym$ is indefinite, with least eigenvalue $-0.014$ and least Schur-complement
eigenvalue $-0.11$. The antisymmetric coupling $\|\Aanti_{12}\|_2=0.076$
exceeds the symmetric coupling $\|\Ssym_{12}\|_2=0.031$. Under $\Metr=I$ the
spectrum contains the rotational pair $0.035\pm0.064i$. No isolated
mutual-cooperation equilibrium was found at these parameters.

The full metric keeps the selected equilibrium a sink and makes the rival a
saddle with $\mu_r=-0.466$ and $\operatorname{tr}(\Mr^{-1}\Ssym)=-0.15$.
Exactly two modes become unstable and the other eight modes stay stable along the SPD path, which is consistent
with \Cref{thm:index}. Execution uses $\eta=2$ and at most
$12\,000$ updates from $120$ initial policies: $60$ drawn from
$\mathcal N(0,1.5^2I)$ and their images under $\Pi$. Gradient play splits
exactly $60/60$, as \Cref{thm:equivariance} predicts for an update that does
not distinguish the two roles (\Cref{fig:ipd}(a)). The metric reaches the
selected equilibrium from all $120$ starts (\Cref{fig:ipd}(b)). A \emph{block-diagonal} metric also destabilizes the rival. It
gives $\mu_r=-0.342$ with two unstable real modes and selects from all $120$
starts. Since
$\operatorname{tr}(\Mr^{-1}\Ssym)=+0.95>0$, the rival becomes a saddle rather
than a source, as \Cref{thm:nash}(3) allows: in ten dimensions, a block-diagonal
metric has room to create a saddle.

\section{Conclusion}
\label{sec:conclusion}
This work uses classical $H$-stability results to select equilibria in differentiable
games: at an equilibrium that some SPD metric stabilizes, a strictly
destabilizing SPD metric exists exactly when the symmetric part of the game
Jacobian is indefinite, which turns equilibrium selection among known
equilibria into a condition on the local game structure. A smooth interpolation between the equilibrium metrics at the selected and rival equilibria then turns the local
designs into a metric field that preserves all zeros. In the stag hunt and
the iterated prisoner's dilemma, where gradient play splits between the two
equilibria, the resulting metric reached the selected equilibrium from every
tested start while leaving all equilibria in place. Future work will apply this construction to trajectory games to
encode traffic conventions, such as the give-way rules of car and ship
encounters.

\bibliographystyle{IEEEtran}
\bibliography{references}

\end{document}